\title{Involutions of Halphen Pencils of Index 2 and Discrete Integrable Systems}
\author{Kangning Wei }
\affil{Department of Mathematics, Technische Universit\"at Berlin,\\ Str. des 17. Juni, Berlin, Germany}
\date{}
\newtheorem{prop}{Proposition}
\newtheorem{thm}{Theorem}
\newtheorem{claim}{Claim}
\newtheorem{rmk}{Remark}
\begin{document}

\maketitle
\begin{abstract}
We constructed involutions for a Halphen pencil of index 2, and proved that the birational mapping corresponding to the autonomous reduction of the elliptic Painlev\'e equation for the same pencil can be obtained as the composition of two such involutions.
\end{abstract}

\section{Introduction}
The discrete elliptic Painlev\'e equation, defined in \cite{Sakai2001}, as a discrete dynamical system, can be described by a birational mapping (depending on parameters), which we shall refer to as the elliptic Painlev\'e mapping. This mapping is often referred to as the non-autonomous version of the QRT mapping. The QRT mapping, often defined on $\mathbb{P}^1\times\mathbb{P}^1$, preserves a pencil of biquadratic curves, which corresponds to a pencil of cubic curves on $\mathbb{P}^2$ under the canonical birational equivalence of $\mathbb{P}^1\times\mathbb{P}^1$ and $\mathbb{P}^2$. Similarly, the elliptic Painlev\'e mapping also preserves a pencil of cubic curves at each step of the time evolution, but this pencil of cubic curves changes from one step to another. It is in this sense that we say the elliptic Painlev\'e mapping is the non-autonomous version of the QRT mapping.\\

An elliptic Painlev\'e mapping comes with nine parameter points, whose configuration determines its dynamical properties \cite{Kajiwara_2017}\cite{Sakai2001}. For some special configurations of the parameter points, one can find autonomous reductions of the elliptic Painlev\'e mapping. More precisely, if we start with nine parameter points which are the base points of a index $k$ Halphen pencil, i.e. they are the base points of a pencil of degree $3k$ curves, each having multiplicity $k$, then the $k$-th iteration of the elliptic Painlev\'e mapping becomes autonomous and preserves the Halphen pencil \cite{Kajiwara2006}.\\

In the QRT case, the QRT mapping is written as the composition of a horizontal and a vertical switch. For a pencil of cubic curves, one can define Manin involutions in terms of the group law on cubic curves. It is shown that the QRT mapping can also be realized as the composition of two generalized Manin involutions \cite{Kamp2018GeneralisedMT}. In \cite{Petrera2021}, the geometric construction of Manin involutions is generalized to certain pencils of elliptic curves of degree 4 and degree 6, which are birationally equivalent to a pencil of cubics and comes naturally from Kahan discretizations.\\

By definition, pencils of cubic curves are Halphen pencils of index 1. In this paper, we construct involutions on Halphen pencils of index 2, which are pencils of sextic curves with nine double base points. These involutions are known as Bertini involutions in the literature \cite[p.127]{Hudson}.
Our main result is that the autonomous reduction of the elliptic Painlev\'e mapping for a index 2 Halphen pencil can be realized as the composition of two Bertini involutions.

\section{Elliptic Painlev\'e Mapping}
There are several different ways to define the elliptic Painlev\'e mapping in the literature. Here we follow the geometric description from \cite{Kajiwara2006}, without mentioning the rather involved construction using root systems on the Picard lattice of the blowing up surface \cite{Kajiwara_2017}\cite{Sakai2001}. The statements in this section can also be proved using the algebraic geometry tools \cite{Kajiwara_2017}. To make this paper self-contained, we give proofs entirely based on the geometric descriptions.\\     

Let $P_1,...,P_9$ be nine points on the projective plane $\mathbb{CP}^2$ in generic position, i.e. there is a unique cubic curve $C_0$ passing through the nine points $P_1,...,P_9$. They are considered as parameters, and $P_{10}\in\mathbb{CP}^2$ is the dependent variable. Then the elliptic Painlev\'e mapping $$T_{ij}:\{P_1,...,P_{10}\}\mapsto\{\bar{P_1},\bar{P_2},...,\bar{P_{10}}\}$$
has the following geometric description:
\begin{itemize}
    \item The transformation of the parameters $P_1,...,P_9$ under $T_{ij}$ is determined by
    \begin{equation}\label{eq:1}
        \bar{P_k}=P_k\; for \; k\neq i,j
    \end{equation}
    \begin{equation}\label{eq:2}
        P_1+..+P_{j-1}+\bar{P_j}+P_{j+1}+...+P_9=0
    \end{equation}
    \begin{equation}\label{eq:3}
 \bar{P_i}+\bar{P_j}=P_i+P_j
 \end{equation}

    where the addition is taken as the group law on the cubic curve $C_0$
    \begin{rmk}
 Equation \eqref{eq:2} means the nine points $P_1,..,P_{j-1},\bar{P_j},P_{j+1},..,P_9$ are the base points of a pencil of cubic curves.
    \end{rmk}
    \item The dependent variable $P_{10}$ determines a cubic curve $C_{P_{10}}$ passing through the nine points $P_1,..,\hat{P_j},..,P_{10}$, where $\hat{P_j}$ means $P_j$ is deleted, and under $T_{ij}$, $P_{10}$ transforms according to
    \begin{equation}\label{eq:4}
            \bar{P_{10}}+\bar{P_j}=P_{10}+P_i
    \end{equation}
    where the addition law is taken as the group law on the cubic curve $C_{P_{10}}$
    \begin{rmk}
The transformation of the parameters points $P_1,...,P_9$ does not depend on the transformation of the variable $P_{10}$.
\end{rmk}
\end{itemize}

\begin{rmk}
The definition does not depend on the choice of the zero element for the group law on the cubic curve. Since the equation \eqref{eq:2} does not depend on the choice of zero, and the equations \eqref{eq:3} and \eqref{eq:4} are both translations on a cubic curve, which also does not depend on the choice of zero.
\end{rmk}

Although we assumed that the nine parameter points $P_1,..,P_9$ are in general position, the map $T_{ij}$ is still well-defined even if $P_1,..,P_9$ are base points of a pencil of cubic curves. In fact, in that case, the equations \eqref{eq:1} \eqref{eq:2} and \eqref{eq:3} imply that $P_1,..,P_9$ are fixed by $T_{ij}$. So the map becomes autonomous.\\

More generally, we have the following statement:
\begin{prop}
If the nine parameter points $P_1,...,P_9$ are the base points of a Halphen pencil of index $k$, i.e., in terms of the addition law on $C_0$, we have $k(P_1+P_2+...P_9)=0$, then the $k$-th iteration of the map $T_{ij}$ is autonomous:
$$T^{k}_{ij}(P_l)=P_l\; for\; any\; l\in\{1,2,..,9\}$$
\end{prop}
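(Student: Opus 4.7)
The plan is to derive explicit formulas for $\bar{P_i}$ and $\bar{P_j}$ in terms of $P_i$, $P_j$ and the total sum $S := P_1+\cdots+P_9$ taken in the group law of $C_0$, and then iterate. Equation~\eqref{eq:2} rewrites as $S - P_j + \bar{P_j} = 0$, giving $\bar{P_j} = P_j - S$. Substituting into equation~\eqref{eq:3} gives $\bar{P_i} = P_i + P_j - \bar{P_j} = P_i + S$. Together with equation~\eqref{eq:1}, one application of $T_{ij}$ therefore acts on the parameters by
$$P_i \longmapsto P_i + S, \qquad P_j \longmapsto P_j - S, \qquad P_l \longmapsto P_l \text{ for } l \neq i,j.$$

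Next I would verify that the geometric data on which the next iteration is built is the same. Both $\bar{P_i} = P_i + S$ and $\bar{P_j} = P_j - S$ lie on $C_0$ because the group law is internal to $C_0$, so $C_0$ is again the cubic through the new nine points, and the group law used to define the next application of $T_{ij}$ coincides with the original. A short check shows the sum is also invariant:
$$\bar{S} = \sum_{k\neq i,j}P_k + (P_i+S) + (P_j-S) = S.$$
Hence the induction step is the same as the base step, and by induction $T_{ij}^m$ acts by $P_i \mapsto P_i + mS$, $P_j \mapsto P_j - mS$, fixing all other parameters. The Halphen condition $kS = 0$ then yields $T_{ij}^k(P_l) = P_l$ for every $l \in \{1,\dots,9\}$.

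The main subtlety, and where I would expect the only real care to be needed, is the identification of the cubic $C_0$ on which the group law is computed. For index $k=1$ the hypothesis $S=0$ already forces the nine points to be base points of a pencil of cubics, $C_0$ is not unique, and the claim reduces to the fixed-point observation already made in the excerpt. For $k\geq 2$ the nine points satisfy only the codimension-one condition $kS=0$ and are still in sufficiently general position to lie on a unique cubic, so the preservation of $C_0$ argued above is unambiguous and the iteration closes cleanly. Note that the dependent variable $P_{10}$ plays no role in the statement, since the second remark already isolates the parameter dynamics from the $P_{10}$ dynamics.
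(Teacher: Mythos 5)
Your proof is correct and follows essentially the same route as the paper's: both compute $\bar{P_j}=P_j-S$ and $\bar{P_i}=P_i+S$ from equations \eqref{eq:2} and \eqref{eq:3}, observe that $S$ is invariant, and conclude by iterating and invoking $kS=0$. Your extra verification that $C_0$ and the group law persist under iteration is a point the paper leaves implicit, but it does not change the argument.
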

\begin{proof}
We only need to prove for $P_i$ and $P_j$.
Denote by $\delta=P_1+P_2+...+P_9$, then the equation \eqref{eq:2} can be written as $\bar{P_j}-P_j=-\delta$, plugging into the equation
\eqref{eq:3}, we obtain
$$\bar{P_i}=P_i+P_j-\bar{P_j}=P_i+\delta$$
Since $P_i+P_j$ is invariant, so is $\delta$, thus we have
$$T_{ij}^k(P_i)=P_i+k\delta=P_i$$
The assertion for $P_j$ holds since $P_i+P_j$ is invariant.
\end{proof}

\section{Main Result}
Now we focus on Halphen pencils of index 2, which are described by nine points $P_1,..,P_9$ on $\mathbb{CP}^2$ satisfying $2(P_1+P_2+...+P_9)=0$, where the addition is taken on the unique cubic curve passing through the nine points. There is a one parameter family of sextic curves having double points at $P_1,...,P_9$, which is the Halphen pencil of index 2 determined by the base points. 
The unique cubic curve $C_0$ passing through the base points $P_1,...,P_9$ is also contained in this pencil as a double curve.We denote this pencil by $\mathcal{H}=\mathcal{P}(6;P_1^2,...,P_9^2)$.\\

On a Halphen pencil of index 2, we can define the following involutions, which we denote by $I_i$, labeled by one of the base points $P_i$. 
\begin{itemize}
    \item For a generic point $P\in\mathbb{P}^2$, there is a unique sextic curve $H_P\in\mathcal{H}$ passing through $P$, and a unique cubic curve passing through $P_1,...,\hat{P_i},..,P_9,P$, denoted by $C^i_P$.
    \item The curve $H_P$ and $C^i_P$ has 18 intersections, counted with multiplicity, 17 of them are given by the base points $P_1,...\hat{P_i},...,P_9$ with multiplicity 2, and $P$.
    \item We define $I_i(P)$ to be the unique other intersection of $C^i_P$ and $H_P$.
\end{itemize}
\begin{rmk}
The involutions $I_{i}$s are known as Bertini involutions in the literature. We refer to \cite{Bayle2000} and \cite{Hudson} for details. 
\end{rmk}

The following is our main result:
\begin{thm}\label{MainThm}
Let $P_1,...,P_9$ be the base points of a Halphen pencil of index 2. Then the autonomous map $T_{ij}^2:\mathbb{P}^2\rightarrow\mathbb{P}^2$, seen as a map of $P_{10}$, is equal to the composition of involutions
$$T_{ij}^2=I_j\circ I_i$$
\end{thm}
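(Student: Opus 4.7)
The plan is to show that both $T_{ij}^2(P_{10})$ and $I_j(I_i(P_{10}))$ are the residual intersection of $H_{P_{10}}$ with a common cubic.

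I would first verify that both maps preserve each fiber of $\mathcal{H}$. This is immediate from the construction for the Bertini involutions. For $T_{ij}^2$, Proposition~1 gives that the parameter points are fixed, so $T_{ij}^2$ preserves the pencil $\mathcal{H}$; a further argument shows that the induced action on the base $\mathbb{P}^1$ of the pencil is trivial, so each individual fiber is preserved and $T_{ij}^2(P_{10}) \in H_{P_{10}}$. By construction, $I_j(I_i(P_{10}))$ is the residual intersection of $H_{P_{10}}$ with the cubic $C^j_{I_i(P_{10})}$ through $\{P_k : k \neq j\}$ and $I_i(P_{10})$: the $18$ intersections consist of $16$ at the double base points, the point $I_i(P_{10})$ itself, and one residual. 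Hence it suffices to prove the key claim $T_{ij}^2(P_{10}) \in C^j_{I_i(P_{10})}$; combined with the fiber preservation and the genericity $T_{ij}^2(P_{10}) \neq I_i(P_{10})$, this forces $T_{ij}^2(P_{10}) = I_j(I_i(P_{10}))$.

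To prove the key claim I would pass to the normalization $\tilde{H}_{P_{10}}$, an elliptic curve on which each node $P_k$ lifts to two points $P_k^\pm$. Comparing the intersection divisor of $C^k_{P_{10}}$ with that of $C_0$ (both cubics, so sharing the class $3[L|_{\tilde{H}_{P_{10}}}]$ in $\mathrm{Pic}(\tilde{H}_{P_{10}})$) yields the identity
\[
[P_{10} + I_k(P_{10})] = [P_k^+ + P_k^-].
\]
Taking this for $k = i$ and $k = j$ shows that $I_j \circ I_i$ restricted to $\tilde{H}_{P_{10}}$ is the translation by $[P_j^+ + P_j^-] - [P_i^+ + P_i^-]$. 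The analogous identification for $T_{ij}^2$ is obtained by unwinding equation~(4) through the two iterations: the first places $\bar{P}_{10}$ on $C^j_{P_{10}}$ satisfying $\bar{P}_{10} + \bar{P}_j = P_{10} + P_i$ in the group law of that cubic, and the second places $\bar{\bar{P}}_{10}$ on the shifted-parameter cubic $C^j_{\bar{P}_{10}}$ with the analogous relation. The Halphen-index-$2$ identity $2\delta = 0$ closes the parameter orbit ($\bar{\bar{P}}_i = P_i$, $\bar{\bar{P}}_j = P_j$) and, via Cayley--Bacharach for the pencil of cubics through $\{P_l : l \neq k\}$, identifies $\bar{P}_k$ as the ninth base point of that pencil. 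Assembling these relations yields the same translation class on $\tilde{H}_{P_{10}}$.

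The main obstacle is this final assembly: the intermediate point $\bar{P}_{10}$ lies on $C^j_{P_{10}}$ but generically off $H_{P_{10}}$, so the group-law relations driving the iteration live on two auxiliary cubics rather than on $\tilde{H}_{P_{10}}$ itself. Systematic bookkeeping with Cayley--Bacharach applied to the pencils of cubics through the eight base points $\{P_k : k \neq i\}$ and $\{P_k : k \neq j\}$---whose ninth CB-determined base points are $\bar{P}_i$ and $\bar{P}_j$ respectively---is what allows these auxiliary relations to be transported into the required divisor-class identity on $\tilde{H}_{P_{10}}$.
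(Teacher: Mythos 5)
Your computation on the normalization $\tilde H_{P_{10}}$ is the solid part of the proposal: comparing the pullbacks of $C^k_{P_{10}}$ and of $C_0$ (both cut out by cubics) does give $P_{10}+I_k(P_{10})\sim P_k^++P_k^-$, hence $I_j\circ I_i$ is the fibrewise translation by $[P_j^++P_j^-]-[P_i^++P_i^-]$. But both steps you rely on for the $T_{ij}^2$ side are genuine gaps. First, the assertion that ``a further argument shows that the induced action on the base $\mathbb{P}^1$ of the pencil is trivial'' is not supplied and is not easy: since $T_{ij}^2$ fixes the nine base points it only permutes the members of $\mathcal{H}$, and the single member one can see is preserved a priori is the multiple fibre $2C_0$; one fixed point on the base is far from forcing the identity. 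In the paper, fibrewise preservation of $\mathcal{H}$ is a \emph{consequence} of the factorization through the involutions, not an input, so assuming it here either begs the question or silently imports the result of \cite{Kajiwara2006} that the paper is re-deriving geometrically.

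Second, the ``final assembly'' is exactly where the work lies, and ``systematic bookkeeping with Cayley--Bacharach'' does not resolve it. The two instances of relation \eqref{eq:4} live on $C^j_{P_{10}}\in\mathcal{C}^j$ and then on the member of $\mathcal{C}^i$ through the intermediate point $\bar P_{10}$, which generically does \emph{not} lie on $H_{P_{10}}$ (it lies on the image Halphen pencil with base points $\bar P_i,\bar P_j$); there is no evident mechanism for converting these two group-law identities on auxiliary cubics into a divisor-class identity on $\tilde H_{P_{10}}$. Moreover, unlike the cubic pencils $\mathcal{C}^i,\mathcal{C}^j$, whose ninth base points $\bar P_i$, $\bar P_j$ lie on \emph{every} member and hence serve as universal evaluation points for a translation, the Halphen pencil has no point common to all fibres at which the translation class of $T_{ij}^2$ could be pinned down. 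This is precisely the obstacle the paper circumvents by introducing the auxiliary involution $J_{ij}$ and identifying each half-step, $T^{(1)}_{ij}=J_{ij}\circ I_j$ and $T^{(2)}_{ij}=I_i\circ J_{ij}$, as a translation on a cubic pencil evaluated at its ninth base point. Finally, a completed version of your argument would have to confront the order of composition: the paper's own proof produces $T_{ij}^2=I_i\circ I_j$, the inverse of the $I_j\circ I_i$ stated in the theorem (these differ, being translations by opposite classes), and your sketch stops before the point at which this sign would be determined.
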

Since $T_{ij}$ is not autonomous, in the expression $T_{ij}^2$, when considered as a map of $P_{10}$, the first and second step are actually different maps, we will denote the first step as $T_{ij}^{(1)}$, and the second step as $T_{ij}^{(2)}$, then the statement is written as $T_{ij}^{(2)}\circ T_{ij}^{(1)}=I_j\circ I_i$.\\

In order to prove the statement, we introduce another involution, defined via the intersection of the following two pencils:
\begin{itemize}
    \item The pencil of cubic curves passing through $P_1,..,\hat{P_i},..,P_9$, denoted by $\mathcal{C}^i$
    \item The pencil of cubic curves passing through $P_1,..,\hat{P_j},..,P_9$, denoted by $\mathcal{C}^j$
\end{itemize}

Generic curves of the two pencils will have 9 intersection points, with 7 of them being the base points $P_1,..,\hat{P_i},..,\hat{P_j},..,P_9$, so we define the involution $J_{ij}$ by the following:
\begin{itemize}
    \item For a generic point $P\in\mathbb{P}^2$, there is a unique curve of the pencil $\mathcal{C}^i$ passing through $P$, denoted by $C^i_P$, and a unique curve of the pencil $\mathcal{C}^j$ passing through $P$,denoted by $C^j_P$
    \item The curve $C^i_P$ and $C^j_P$ have 9 intersections, counted with multiplicities, with 8 of them given by the 7 base points $P_1,..,\hat{P_i},..,\hat{P_j},..,P_9$ and $P$
    \item We define $J_{ij}(P)$ to be the unique other intersection of the two curves $C^i_P$ and $C^j_P$
\end{itemize}
Notice that both of the involution $J_{ij}$ and the Halphen involution $I_j$ are involutions preserving the pencil $\mathcal{C}^j$, while the first step elliptic Panilev\'e map $T_{ij}^{(1)}$ is a translation on the pencil $\mathcal{C}^j$. We claim the following:
\begin{prop}
The map $T_{ij}^{(1)}$ (as a map of $P_{10}$) is the composition of the two involutions $J_{ij}$ and $I_j$:
$$T_{ij}^{(1)}=J_{ij}\circ I_j$$
\end{prop}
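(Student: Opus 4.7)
The plan is to reduce everything to a group-law computation on the single cubic $C^j_{P_{10}}=C_{P_{10}}$. Set $Q:=I_j(P_{10})$. By the very definition of the Bertini involution, $Q$ lies on $C^j_{P_{10}}$, so the cubic $C^j_Q\in\mathcal{C}^j$ used in the definition of $J_{ij}(Q)$ coincides with $C^j_{P_{10}}$. Consequently $Q$, $J_{ij}(Q)$ and $\bar{P_{10}}$ all lie on $C^j_{P_{10}}$, and it suffices to verify the equality $J_{ij}(Q)=\bar{P_{10}}$ as points of this one cubic.

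I would fix a flex of $C^j_{P_{10}}$ as origin of the group law, so that, by Abel--Jacobi, every divisor cut out on $C^j_{P_{10}}$ by a plane curve of degree $d$ is linearly equivalent to $3d$ times the origin and therefore sums to zero. The first step is to identify $\bar{P_j}$ geometrically. Equation~\eqref{eq:2} says that the nine points $P_1,\dots,\hat{P_j},\dots,P_9,\bar{P_j}$ are the base locus of a pencil of cubics; since this pencil shares its eight points $P_k$ ($k\neq j$) with $\mathcal{C}^j$, it must equal $\mathcal{C}^j$, and so $\bar{P_j}$ is the ninth base point of $\mathcal{C}^j$, in particular $\bar{P_j}\in C^j_{P_{10}}$. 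Intersecting $C^j_{P_{10}}$ with any other member of $\mathcal{C}^j$ then gives, in the group law,
\[
\sum_{k\neq j}P_k+\bar{P_j}=0.
\]

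Next I would translate the definitions of $I_j(P_{10})$ and $J_{ij}(Q)$ into two further sum-zero relations on the same cubic. The sextic $H_{P_{10}}$ meets $C^j_{P_{10}}$ in a divisor of degree $18$; for generic $P_{10}$ the ordinary node of $H_{P_{10}}$ at each $P_k$ with $k\neq j$ meets the smooth cubic with multiplicity exactly $2$, while $P_{10}$ and $Q$ each contribute $1$, yielding
\[
2\sum_{k\neq j}P_k+P_{10}+Q=0,
\]
and hence $Q=2\bar{P_j}-P_{10}$. Similarly, $C^i_Q\cap C^j_{P_{10}}$ consists generically of the seven common base points $P_k$ ($k\neq i,j$) together with $Q$ and $J_{ij}(Q)$, so
\[
\sum_{k\neq i,j}P_k+Q+J_{ij}(Q)=0,
\]
and substituting $\sum_{k\neq i,j}P_k=-\bar{P_j}-P_i$ together with the formula for $Q$ gives $J_{ij}(Q)=P_{10}+P_i-\bar{P_j}$. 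This is precisely the value of $\bar{P_{10}}$ read off from Equation~\eqref{eq:4} with addition on $C_{P_{10}}=C^j_{P_{10}}$, so $J_{ij}\circ I_j=T_{ij}^{(1)}$ as claimed.

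The main obstacle I anticipate is the careful bookkeeping of intersection multiplicities needed for the two Abel--Jacobi identities above: at each double base point $P_k$ one has to rule out that the tangent of the smooth cubic $C^j_{P_{10}}$ coincides with one of the two branches of the node of $H_{P_{10}}$ (which would push the contribution above $2$), and at each of the seven shared base points of $C^i_Q$ and $C^j_{P_{10}}$ one needs transverse intersection. Both are genericity conditions on $P_{10}$ and should be either incorporated as standing hypotheses or checked by a short dimension argument; once in place, the remainder of the proof is pure group-law manipulation on the fixed cubic $C^j_{P_{10}}$.
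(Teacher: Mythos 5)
Your proof is correct, but it takes a genuinely different route from the paper. The paper first observes that $J_{ij}\circ I_j$ is automatically a translation on each fiber of $\mathcal{C}^j$ (being a composition of two involutions preserving that pencil), and then pins down the translation vector by evaluating at the single special point $\bar{P_j}$, where both involutions can be computed by hand because the relevant members of $\mathcal{H}$ and $\mathcal{C}^i$ degenerate to $2C_0$ and $C_0$ respectively. You instead work at a generic point: fixing a flex of $C^j_{P_{10}}$ as origin and using Abel--Jacobi, you derive closed-form expressions $I_j(P)=2\bar{P_j}-P$ and $J_{ij}(P)=\bar{P_j}+P_i-P$ on the fiber, compose them, and match the result against Equation~\eqref{eq:4}. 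Your identification of $\bar{P_j}$ as the ninth base point of $\mathcal{C}^j$ (which the paper uses only implicitly) is what makes the sum-to-zero relations close up. The trade-off: the paper's argument is shorter and avoids multiplicity bookkeeping at the double points, but it quietly assumes that the composition of the two involutions is a translation; your computation makes that translation structure explicit and self-contained, at the cost of the genericity checks you correctly flag at the end (transversality at the seven shared base points, and non-tangency of $C^j_{P_{10}}$ to the branches of the nodes of $H_{P_{10}}$) --- these are the same genericity assumptions already built into the paper's definitions of $I_j$ and $J_{ij}$, so they do not constitute a gap.
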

\begin{proof}
The elliptic Painlev\'e mapping $T_{ij}^{(1)}$, as a birational map of $P=P_{10}$ on $\mathbb{P}^2$, is given by 
$$\bar{P}=T_{ij}^{(1)}(P)=P+P_i-\bar{P_j}$$
where the group law is taken on the cubic curve $C^j_P$ of the pencil $\mathcal{C}^j$. 
In other words,for any given $P$, $T_{ij}^{(1)}$ is a translation on the curve $\mathcal{C}^j_P$ by $P_i-\bar{P_j}$.\\
To prove the statement, it is then sufficient to show that the composition $J_{ij}\circ I_j$ is also the same translation. Since both $J_{ij}$ and $I_j$ are involutions preserving the pencil $\mathcal{C}^j$, the composition is automatically a translation on the pencil. It suffices to prove that $J_{ij}\circ I_j(\bar{P_j})=P_i$.\\
So the proposition is proved if we can prove the following two statements
\begin{claim}
$I_j(\bar{P_j})$ is well defined and
$I_j(\bar{P_j})=\bar{P_j}$
\end{claim}
\begin{claim}
$J_{ij}(\bar{P_j})$ is well defined and
$J_{ij}(\bar{P_j})=P_i$
\end{claim}
To prove the first claim, we observe that $\bar{P_j}$ is a base point of the pencil $\mathcal{C}^j$, but it is not a base point of the Halpehn pencil $\mathcal{H}$. It follows that there is a unique curve $H_{\bar{P_j}}$ of the Halphen pencil $\mathcal{H}$ passing through $\bar{P_j}$, which is exactly the double curve of the cubic curve $C_0$ which passes through $P_1,..,P_9$ (also passing through $\bar{P_i}$ and $\bar{P_j}$ by definition of the elliptic Painlev\'e mapping). So for any generic curve of the pencil $\mathcal{C}^j$, its intersections with $H_{\bar{P_j}}$ are given by the nine points $P_1,..,\bar{P_j},..,P_9$, each with multiplicity 2. It follows from definition that $I_j(\bar{P_j})=\bar{P_j}$.\\
To prove the second claim, we observe that $\bar{P_j}$ is not a base point of the pencil $\mathcal{C}^i$, there is one unique curve $C^i_{\bar{P_j}}$ of the pencil $\mathcal{C}^i$ passing through $\bar{P_j}$, which is exactly the cubic curve $C_0$ which passes through $P_1,..,P_9$ and $\bar{P_i},\bar{P_j}$.So for any generic curve of the pencil $\mathcal{C}^j$, its intersections with $C^i_{\bar{P_j}}$ are the nine points $P_1,..,\bar{P_j},...,P_9$.It follows from the definition that $J_{ij}(\bar{P_j})=P_i$.
\end{proof}
Similarly, we have the following proposition
\begin{prop}
The map $T_{ij}^{(2)}$ (as a map of $P_{10}$) is the composition of the two involutions $I_i$ and $J_{ij}$:
$$T_{ij}^{(2)}=I_i\circ J_{ij}$$
\end{prop}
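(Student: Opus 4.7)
The plan is to mirror the proof of the preceding proposition, with the roles of $i$ and $j$ interchanged and the two involutions swapped: in place of $J_{ij}\circ I_j$ I will express $T_{ij}^{(2)}$ as $I_i\circ J_{ij}$. The key new ingredient is a pencil identification that uses the index $2$ hypothesis $2\delta=0$ (with $\delta=P_1+\dots+P_9$) in an essential way.

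First, I would identify the pencil of cubics on which $T_{ij}^{(2)}$ acts as a translation. By definition $T_{ij}^{(2)}(\bar{P})=\bar{P}+\bar{P_i}-\bar{\bar{P_j}}$, where the group law is taken on the cubic of the pencil through $\{\bar{P_k}:k\neq j\}$ passing through $\bar{P}$. Since $\bar{P_k}=P_k$ for $k\neq i,j$ and $\bar{P_i}=P_i+\delta$, the eight explicit base points of this pencil are $\{P_k:k\neq i,j\}\cup\{\bar{P_i}\}$. On the other hand, on any cubic in $\mathcal{C}^i$ the nine base points must sum to zero under the group law, so the ninth base point of $\mathcal{C}^i$ equals $-\sum_{k\neq i}P_k=P_i-\delta$, which equals $P_i+\delta=\bar{P_i}$ because $2\delta=0$. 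Hence $\bar{P_i}$ lies on every curve of $\mathcal{C}^i$, so the second-step pencil and $\mathcal{C}^i$ share eight base points and must coincide. Combined with $\bar{\bar{P_j}}=P_j$ from Proposition~1, this shows $T_{ij}^{(2)}$ is the translation by $\bar{P_i}-P_j$ on $\mathcal{C}^i$. Since $I_i$ and $J_{ij}$ both preserve $\mathcal{C}^i$, the composition $I_i\circ J_{ij}$ is likewise a translation on $\mathcal{C}^i$, and it suffices to verify that the two translations agree at one convenient point.

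The natural choice, mirroring the role of $\bar{P_j}$ in the previous proof, is $P_j$: it lies on every cubic of $\mathcal{C}^i$ but is not a base point of $\mathcal{C}^j$. For $J_{ij}(P_j)$, the unique curve of $\mathcal{C}^j$ through $P_j$ is $C_0$, and for any generic $C^i\in\mathcal{C}^i$ the nine intersections of $C^i$ with $C_0$ are precisely the nine base points $\{P_k:k\neq i\}\cup\{\bar{P_i}\}$ of $\mathcal{C}^i$; removing the seven common base points $\{P_k:k\neq i,j\}$ and $P_j$ itself leaves $\bar{P_i}$. For $I_i(\bar{P_i})$, since $\bar{P_i}\in C_0$ and generically $\bar{P_i}\neq P_k$ for every $k$, the unique Halphen curve through $\bar{P_i}$ is the double cubic $2C_0$; its $18$ intersections with any generic $C^i$ are the nine base points of $\mathcal{C}^i$, each with multiplicity $2$, and removing $\{P_k:k\neq i\}$ with multiplicity $2$ (total $16$) and $\bar{P_i}$ with multiplicity $1$ leaves $\bar{P_i}$ with multiplicity $1$. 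Therefore $I_i\circ J_{ij}(P_j)=I_i(\bar{P_i})=\bar{P_i}=T_{ij}^{(2)}(P_j)$, and the two translations coincide on every cubic of $\mathcal{C}^i$.

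The main obstacle I expect is the pencil identification in the first step: it is only by virtue of the index $2$ relation $2\delta=0$ that the ninth base point of $\mathcal{C}^i$ happens to equal $\bar{P_i}$, and without this coincidence the second-step pencil would be different from $\mathcal{C}^i$ and the composition $I_i\circ J_{ij}$ would no longer capture $T_{ij}^{(2)}$. The evaluation at $P_j$ also requires a small care, since $P_j$ is a base point of $\mathcal{C}^i$ and so $C^i_{P_j}$ is not unique; but, exactly as in the preceding proof, the intersections we subtract do not depend on the chosen cubic, so $J_{ij}$ and $I_i$ still take well-defined values at these boundary points.
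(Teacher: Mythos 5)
Your proof is correct and follows essentially the same route as the paper: reduce $T_{ij}^{(2)}$ to the translation by $\bar{P_i}-P_j$ on $\mathcal{C}^i$, note that $I_i\circ J_{ij}$ is also a translation on that pencil, and pin it down via the two evaluations $J_{ij}(P_j)=\bar{P_i}$ (using $C_0$ as the unique member of $\mathcal{C}^j$ through $P_j$) and $I_i(\bar{P_i})=\bar{P_i}$ (using the double curve $2C_0$). The only difference is that you explicitly verify, via $2\delta=0$, that the second-step pencil coincides with $\mathcal{C}^i$ because its ninth base point is $\bar{P_i}$ --- a point the paper's proof takes for granted --- which is a welcome addition rather than a deviation.
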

\begin{proof}
The map $T_{ij}^{(2)}$,as a birational map on $\mathbb{P}^2$, is defined by
$$T_{ij}^{(2)}(P)=P+\bar{P_i}-P_j$$
where the group law is taken on the cubic curve $C^i_P$ of the pencil $\mathcal{C}^i$. In other words,for any given $P$,$T_{ij}^{(2)}$ is a translation on the curve $\mathcal{C}^i_P$ by $\bar{P_i}-P_j$.\\
Similarly, the proposition follows from the following two statements:
\begin{claim}
$J_{ij}(P_j)$ is well defined and $J_{ij}(P_j)=\bar{P_i}$
\end{claim}
\begin{claim}
$I_i(\bar{P_i})$ is well defined and $I_i(\bar{P_i})=\bar{P_i}$
\end{claim}
The first claim is proved in exactly the same fashion as before, by observing that $P_j$ is not a base point of the pencil $\mathcal{C}^j$ and the unique cubic curve of the pencil $\mathcal{C}^j$ passing through $P_j$ is the cubic $C_0$ passing through $P_1,...,P_9$ and $\bar{P}_i,\bar{P}_j$. So for any generic curve of the pencil $\mathcal{C}^i$, its intersections with $C_0$ are the nine points $P_1,..,\bar{P_i},..,P_j,..,P_9$, and by definition $J_{ij}(P_j)=\bar{P_i}$.\\
The second claim also follows similarly from the observation that $\bar{P_i}$ is not a base point of the Halphen pencil $\mathcal{H}$ and the unique Halphen curve $H_{\bar{P_i}}$ of the pencil $\mathcal{H}$ passing through $\bar{P_i}$ is the double curve of the cubic curve $C_0$. So any generic curve of the pencil $\mathcal{C}^i$ intersects with the curve $H_{\bar{P_i}}$ at the nine points $P_1,..,\bar{P_i},..,P_9$,each with multiplicity 2.It follows from the definition that $I_i(\bar{P_i})=\bar{P_i}$. 
\end{proof}
Combining the propositions, the proof of the theorem now follows easily as $$T_{ij}^{(2)}\circ T_{ij}^{(1)}=I_i\circ J_{ij}\circ J_{ij}\circ I_j=I_i\circ I_j$$
since $J_{ij}$  is a involution. \\
\begin{rmk}
We remark here that it does not seem to be possible to define similar involutions for Halphen pencil of index higher than 2. The involutions we defined make use of another pencil of curves whose intersection with the Halphen pencil generically having exactly 2 points besides the base points. Or in more precise terms, the divisor classes  of these two pencils have intersection number 2. However, for Halphen pencil with higher index, such pencil does not exist as the divisor class of the Halphen pencil will be of the form $-nK$ with $n\geq3$ and $K$ being the canocial disvisor.
\end{rmk}

\section{Example: HKY Mapping}
This example is taken from \cite{article}, known as an HKY mapping. In \cite{Carstea_2012}, a symmetric version of this map is considered, and is classified in the (i-2) class, i.e., mappings which preserve a index 2 Halphen pencil.\\

The map is defined by the recurrence relation:
\begin{equation}
    y_ny_{n-1}=\frac{x_n^2-t^2}{sx_n-1}
\end{equation}
\begin{equation}
        x_{n+1}x_n=\frac{y_n^2-t^2}{y_n/s-1}
\end{equation}
The map $\Phi_1:(x_n,y_n)\mapsto(x_{n+1},y_n)$ and $\Phi_2:(x_{n+1},y_n)\mapsto(x_{n+1},y_{n+1})$ are both involutions.
\begin{rmk}
In the symmetric case,i.e.,s=1, we denote $X_{2n}=x_n$, $X_{2n-1}=y_n$, then the map $\Phi:(X_k,X_{k+1})\mapsto(X_{k+1},X_{k+2})$ is the composition of $\Phi_1$ or $\Phi_2$ with a symmetry switch. In comparison with the QRT case, the two involutions $\Phi_1$ and $\Phi_2$ corresponds to the horizontal and vertical switches, while in the symmetric case the map $\Phi$ corresponds to the QRT root. 
\end{rmk}
Rewrite the two maps in homogeneous coordinates:
\begin{equation}
    \Phi_1:[x:y:z]\mapsto [s(y^2-t^2z^2)z:xy(y-sz):xz(y-sz)]
\end{equation}
\begin{equation}
    \Phi_2:[x:y:z]\mapsto [xy(sx-z):(x^2-t^2z^2)z:yz(sx-z)]
\end{equation}
The HKY mapping is the composition $\Phi_2\circ\Phi_1$.\\

This map preserves an index 2 Halphen pencil. To see this, one can performs nine blowing ups at the following points to lift the map to a surface automorphism.  
\begin{equation}
\begin{split}
     P_1=[0:-t:1] \\
     P_2=[0:t:1] \\
     P_3=[-t:0:1] \\
     P_4=[t:0:1] \\
     P_5=[1:0:0] \\
     P_6=[0:1:0] \\
     P_7=[s^2:1:0]\\
\end{split}
\end{equation}
\begin{itemize}
    \item The point $P_8$ is infinitely close to $P_5$, given by the direction $\{y=sz\}$
    \item The point $P_9$ is infinitely close to $P_6$, given by the direction $\{z=sx\}$
\end{itemize}
Denoting the corresponding divisors coming from the blowing up at $P_i$ by $E_i$.
Then the singularity confinement pattern of $\Phi_1$ and $\Phi_2$ is summarized as follows:
\begin{equation}
    \begin{split}
        \overline{\{y=sz\}} \xrightarrow{\text{$\Phi_1$}} E_8 \xrightarrow{\text{$\Phi_2$}} E_7 \xrightarrow{\text{$\Phi_1$}} E_9 \xrightarrow{\text{$\Phi_2$}} \overline{\{z=sx\}}  \\
        \overline{\{x=0\}} \xrightarrow{\text{$\Phi_1$}} E_5-E_8 \xrightarrow{\text{$\Phi_2$}} \overline{\{z=0\}} \xrightarrow{\text{$\Phi_1$}} E_6-E_9 \xrightarrow{\text{$\Phi_2$}} \overline{\{y=0\}} \\
     \overline{\{y=tz\}} \xrightarrow{\text{$\Phi_1$}} E_2 \xrightarrow{\text{$\Phi_2$}} E_2 \xrightarrow{\text{$\Phi_1$}} \overline{\{y=tz\}} \\
        \overline{\{y=-tz\}} \xrightarrow{\text{$\Phi_1$}} E_1 \xrightarrow{\text{$\Phi_2$}} E_1 \\
         \overline{\{x=tz\}} \xrightarrow{\text{$\Phi_2$}} E_4 \xrightarrow{\text{$\Phi_1$}} E_4 \\
        \overline{\{x=-tz\}} \xrightarrow{\text{$\Phi_2$}} E_3 \xrightarrow{\text{$\Phi_1$}} E_3 \\
    \end{split}
\end{equation}
Then the mappings $\Phi_1$ and $\Phi_2$ are exactly the involutions $I_4=I_3$ and $I_1=I_2$, which we defined for an index 2 Halphen pencil.

\section{Conclusion}
In this paper, we defined a certain type of involutions on a index 2 Halphen pencil. And we showed that the compositions of these involutions are twice iterations of elliptic Painlev\'e mapping whose nine parameter points being the base points of the same Halphen pencil. In this sense, the involutions we defined for index 2 Halphen pencils play the same role as the Manin involtuions defined on a cubic pencil, where compositions of the involutions give rise to the QRT mapping. \\

\section{Acknowledgement}
This research is supported by the DFG Collaborative Research Center TRR 109 “Discretization in Geometry and Dynamics”.
\bibliographystyle{plain}
\bibliography{Bibtex.bib}

\end{document}